\newcommand{\stkout}[1]{\ifmmode\text{\sout{\ensuremath{#1}}}\else\sout{#1}\fi}
\tikzset{
  >=latex, % arrow style
  every path/.style={line width=1.2pt}, % thicker lines everywhere
  every arrow/.append style={scale=1.5}, % bigger arrowheads
  dotted/.append style={line width=1.2pt} % thicker dotted lines
}
\newlist{titemize}{itemize}{1}
\setlist[titemize]{nosep,left=0pt,label=--}
\theoremstyle{plain}
\newtheorem{theorem}{Theorem}[section]
\theoremstyle{definition}
\newtheorem{definition}[theorem]{Definition}
\newtheorem{assumption}{Assumption}
\begin{document}

\begin{frontmatter}
\title{Comparing Two Proxy Methods for Causal Identification}
\begin{aug}
%%%%%%%%%%%%%%%%%%%%%%%%%%%%%%%%%%%%%%%%%%%%%%%
%% Additional information such as            %%
%% identifying the corresponding author must %%
%% be included in in the Acknowledgments     %%
%% section if necessary.                     %%
%% ORCID can be inserted by command:         %%
%% \orcid{0000-0000-0000-0000}               %%
%%%%%%%%%%%%%%%%%%%%%%%%%%%%%%%%%%%%%%%%%%%%%%%
\author[A]{\fnms{Helen}~\snm{Guo}\ead[label=e1]{hguo51@jh.edu}}, \author[A]{\fnms{Elizabeth L.}~\snm{Ogburn}\ead[label=e2]{eogburn@jhu.edu}},
\and
\author[B]{\fnms{Ilya}~\snm{Shpitser}\ead[label=e3]{ilyas@cs.jhu.edu}}

\address[A]{Department of Biostatistics,
Johns Hopkins Bloomberg School of Public Health \printead[presep={,\ }]{e1,e2}}

\address[B]{Department of Computer Science,
Johns Hopkins Whiting School of Engineering \printead[presep={,\ }]{e3}}
\end{aug}

\begin{abstract}
Identifying causal effects in the presence of unmeasured variables is a fundamental challenge in causal inference, for which proxy variable methods have emerged as a powerful solution. We contrast two major approaches in this framework: (1) bridge equation methods, which leverage solutions to integral equations to recover causal targets, and (2) array decomposition methods, which recover latent factors used to identify counterfactual quantities via eigendecomposition tasks. We compare the model restrictions underlying these two approaches and provide insight into implications of the underlying assumptions, clarifying the scope of applicability for each method.
\end{abstract}

\begin{keyword}
\kwd{proximal causal learning}
\kwd{latent variables}
\kwd{causal models}
\end{keyword}

\end{frontmatter}

\section{Introduction}\label{sec:1}

Identifying causal effects in the presence of unmeasured variables is a fundamental challenge in causal inference. Interest in proxy variable methods, i.e. methods that leverage variables associated with latent variable(s), has grown rapidly in both applied and methodological domains. Within this landscape, two foundational lines of work underpin proxy-based approaches to nonparametric identification\footnote{In this paper, ``nonparametric identification'' refers to identification under conditional independence constraints and additional functional restrictions, provided these are sufficiently weak in the sense that under relative support size/dimensionality assumptions on the variables, distribution violating them can be approximated arbitrarily well by distributions that satisfy them \citep{semiparametric2020cui, Canay_testability2013}.} of causal effects: (1) the bridge equation approach first introduced by \citet{miao2018identifying}; and (2) the array decomposition approach, introduced to the causal literature by \citet{kuroki14measurement} and \citet{allman2015parameter}, building on ideas for full law identification in latent variable models which trace back to \citet{kruskal1977three}. %Both approaches admit generalizations, to varying degrees, to continuous settings. 

The bridge equation approach was developed for settings in which proxies are available for an unmeasured confounder, allowing the counterfactual distribution to be recovered through re-expression of the standard confounding adjustment formula, without identification of the full law \citep{miao2018identifying}. The method leverages solutions to bridge equations—a generalization of solutions to linear systems—to recover the causal target. This method was later adapted to settings where proxies are available for an unmeasured mediator \citep{ghassami24causal}, recovering causal effects through re-expression of Judea Pearl's front-door adjustment formula \citep{pearl95causal}.

The array decomposition approach imposes invertibility conditions similar to those that ensure solutions to bridge equations, but rather than solving directly for a causal target, it recovers the full joint law of latent and observed variables, from which causal effects may be derived. Originating from the work of \citet{kruskal1977three}, this framework facilitates causal effect identification in models with key hidden variables beyond confounders, such as in settings with unmeasured exposures \citep{zhou2024causalinferencehiddentreatment}. 

Both approaches identify causal effects in the presence of unmeasured confounding under distinct, non-nested conditional independence structures \citep{deaner2023controllinglatentconfoundingtriple}. This naturally raises the question of how the frameworks compare under shared conditional independences, in terms of additional assumptions imposed.

This paper aims to synthesize and explicate the two frameworks, placing them on common footing under shared conditional independences. %To our knowledge, no prior work has undertaken a systematic comparison of their \helen{additional assumptions} under shared conditional independence assumptions. 
Since the assumptions underlying each framework are technically subtle and often difficult to interpret, clarifying their precise relationship is essential for both methodological development and applied use. Even among leading contributors to the field, there remains active discussion about how the two approaches differ, underscoring the need for careful exposition.

The assumptions underlying much of the existing literature are most transparent in fully discrete settings. In this regime, conditional distributions can be represented as matrices, and identification reduces to invertibility or rank conditions on these finite-dimensional objects. This algebraic perspective provides both technical clarity and intuition. Both frameworks, however, extend to models with continuous variables.

Accordingly, we frequently reference the discrete formulation throughout the paper to build intuition, and then discuss the continuous analogues in detail. Section~\ref{sec:1} provides a brief overview of relevant background. In Section~\ref{sec:2}, we use a matrix representation to elucidate the assumptions underlying the bridge equation approach and to motivate their continuous analogues. Section~\ref{sec:3} begins with the discrete formulation of the array decomposition approach, and then transitions to full law identification in continuous settings as developed by \citet{hu2008instrumental} and extended to causal targets by \citet{deaner2023controllinglatentconfoundingtriple}. Section~\ref{sec:4} compares the two approaches under shared conditional independences in the presence of hidden confounding. Section~\ref{sec:5} briefly discusses estimation strategies, and Section~\ref{sec:6} concludes with directions for future work.

\subsection{Causal Graphical Models}\label{subsec:1a}

Bayesian networks, represented as directed acyclic graphs (DAGs), offer a structured framework for statistical modeling, where nodes denote random variables and edges encode conditional independence assumptions \citep{pearl88probabilistic}. DAGs may be interpreted causally \citep{pearl09causality, spirtes01causation}, encoding counterfactual relationships which enable formal reasoning about causal effects and their identifiability from observed data \citep{richardson13swig}. For example, Figure~\ref{fig:1} depicts a simple confounding structure in which a common cause \(C\) influences both treatment \(A\) and outcome \(Y\). This structure entails the counterfactual independence statement \(Y(a) \perp\!\!\!\perp A \mid C\), known as conditional ignorability, where $Y(a)$ denotes a \emph{potential outcome} random variable, interpreted to mean ``$Y$ had $A$, possibly contrary to fact, been set to value $a$'' \citep{pearl09causality}. 

The methods we discuss rely on conditional independences and counterfactual assumptions reflected in causal graphical models referenced throughout the paper. For an introduction to causal DAGs sufficient to understand this paper, see \citet{elwert2013graphical}.

\subsection{Adjustment Formula}\label{subsec:1b}

Before introducing proximal methods for identification of causal targets, we first review the standard covariate adjustment formula (Equation~\ref{eq:adjustment}). 

Consider the causal graphical model in Figure~\ref{fig:1}, where variables \(C\) creating treatment-outcome (\(A\) and \(Y\), respectively) confounding are observed. The counterfactual distribution \(f_{Y(a)}(y)\) is identified by the standard confounding adjustment formula,

\begin{equation}\label{eq:adjustment}
f_{Y(a)}(y) 
= \int f_{Y \mid A,C}(y \mid a, c) \, f_C(c) \, dc,
\end{equation}
where we use $f$ to denote both full and observed data distributions throughout, provided the following conditions hold for each \(a \in \mathcal{A}\):  

\begin{assumption}[Conditional Ignorability] \label{assump:ignorability}
\[
Y(a) \perp\!\!\!\perp A \mid C.
\]
\end{assumption}

\begin{assumption}[Consistency]
\[Y = Y(a) \text{ when } A = a.\]
\end{assumption}

\begin{assumption}[Positivity]
\[f_{A \mid C}(a \mid c) > 0 \text{ for all } c \in\mathcal{C}.\]
\end{assumption}

Consequently, the average causal effect in the case of binary \(A\) is identified by Equation~\ref{eq:adjustment2}. 

{\small
\begin{equation}\label{eq:adjustment2}
\begin{aligned}
&\quad\quad \mathbb{E}[Y(1)] - \mathbb{E}[Y(0)] =\\
& \int \Big(
      \mathbb{E}[Y \mid A=1, C=c]
      -\mathbb{E}[Y \mid A=0, C=c]
      \Big) f_C(c) \, dc.
\end{aligned}
\end{equation}
}

In general, Assumption~\ref{assump:ignorability} holds if, by the rules of d-separation, conditioning on \(C\) blocks all backdoor paths (which start with an arrowhead into $A$) from \(A\) to \(Y\) \citep{pearl09causality, vanderweele13on, richardson13swig}. 

\begin{figure}[tb]
    \centering
    \begin{tikzpicture}[>=latex, node distance=2cm, anchor=center]
        % Define nodes
        \node (A) [draw, circle] {A};
        \node (Y) [draw, circle, right of=A] {Y};
        \node (C) [draw, circle, above of=A, xshift=1cm] {C};

        % Draw edges
        \draw[->] (A) -- (Y);
        \draw[->] (C) -- (Y);
        \draw[->] (C) -- (A);
    \end{tikzpicture}
    \caption{\centering 
Observed confounding; \\
Supports $\mathcal{A}, \mathcal{Y}, \mathcal{C}$}
    \label{fig:1}
\end{figure}
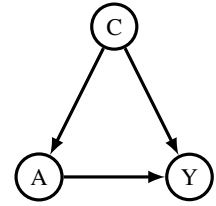

We provide an analogous review of the front-door adjustment formula from  \citet{pearl95causal} in Appendix~\ref{app:frontdoor}.

\section{Proximal Causal Learning via Bridge Equations}\label{sec:2}

We now present the assumptions of the proximal causal learning approach via bridge equations, introduced by \citet{miao2018identifying} as a framework for addressing hidden confounding using confounder proxies, recovering the counterfactual distribution via the standard adjustment formula. \citet{ghassami24causal} develop analogous assumptions leveraging proxies of a hidden mediator, enabling identification via the frontdoor adjustment formula in Equation~\ref{eq:frontdoor}, Appendix~\ref{app:frontdoor}.

If variables \(C\) creating treatment-outcome  confounding in Figure~\ref{fig:1} are unobserved, the counterfactual distribution \(f_{Y(a)}(y)\), and average causal effects, are not identifiable without further assumptions. 

Consider the causal graphical model in Figure~\ref{fig:2}. In this figure, dashed edges (\tikz[baseline={(0,-0.5ex)}]{\draw[dashed,->] (0,0) -- (0.6,0);}) denote optional causal relationships; their presence or absence does not change the results discussed in this paper. The variables \(Z\) and \(W\) are descendants of, and reasonably associated with, the unmeasured confounder \(U\); we refer to them as \emph{proxies} of \(U\). Specifically, \(Z\) is a \emph{treatment-inducing} proxy, influencing \(A\), while \(W\) is an \emph{outcome-inducing} proxy, influencing \(Y\).

We emphasize that the term \emph{proxy} is used informally and lacks a universally consistent definition. Alternatively, under the bridge equation approach, \(Z\) may be a cause of \(U\), \(W\) may be a cause of \(U\), or \(A\) may be a cause of \(Z\). Admissible configurations of directed acyclic graphs for this approach are described in \citet{tchetgen2024proximal}.

Had \( U \) been observed, the adjustment formula would yield the identifying expression from Equation~\ref{eq:adjustment}, controlling for $U$ in place of $C$. \citet{miao2018identifying} replace the right hand side of Equation~\ref{eq:adjustment} with a functional of the observed law $f_{A,Y,W,Z}(a,y,w,z)$, leveraging Assumptions~\ref{assump:independence1} and~\ref{assump:bridge1}.

\begin{assumption}\label{assump:independence1}
    $W \perp\!\!\!\perp Z, A \mid U.$
\end{assumption}

\begin{assumption}\label{assump:independence2}
    $Z \perp\!\!\!\perp Y \mid U, A.$
\end{assumption}

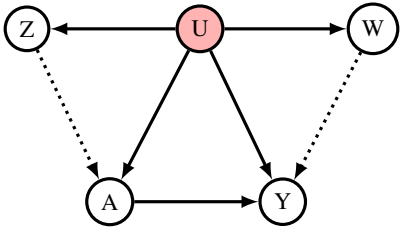
\begin{figure}[tb]
         \centering
        \begin{tikzpicture}[>=latex, node distance=2.3cm, anchor=center]
            % Define nodes
            \node (A) [draw, circle] {A};
            \node (Y) [draw, circle, right of=A] {Y};
             \node (U) [draw, circle, fill=red!30, above of=A, xshift=1.2cm] {U};
            \node (W) [draw, circle, right of=U] {W};
            \node (Z) [draw, circle, left of=U] {Z};

            % Draw original edges
            \draw[->] (A) -- (Y);
            \draw[->] (U) -- (Z);
            \draw[->] (U) -- (W);
            \draw[->] (U) -- (Y);
            \draw[->] (U) -- (A);
            \draw[->, dashed] (Z) -- (A);
            \draw[->, dashed] (W) -- (Y);
        \end{tikzpicture}
        \caption{
    \centering
    Miao et al. (2018); \\
    Supports
    \(\mathcal{A}, \mathcal{Y}, \mathcal{U}, \mathcal{W}, \mathcal{Z}\)}
        \label{fig:2}
     \hfill
\end{figure}

Assumptions~\ref{assump:independence1} and~\ref{assump:independence2} are conditional independences reflected in the graphical model depicted in Figure~\ref{fig:2}. \citet{miao2018identifying} impose two additional conditions (Assumptions \ref{assump:completeness1} and~\ref{assump:bridge1}). We will elaborate upon these additional two assumptions and their meanings in Sections~\ref{subsec:2b} and~\ref{subsec:2c} below. A detailed proof of identification is given in their paper.  

\begin{assumption}[Completeness]\label{assump:completeness1}
    For each \(a \in \mathcal{A}\) and \(g\) square-integrable, \[\mathbb{E}[g(u) \mid z, a] = 0  \,\text{ a.s.} \iff g(u) = 0  \,\text{ a.s.}\]
\end{assumption}

\begin{assumption}[Solution to Bridge Equation]\label{assump:bridge1}
For each \(a \in \mathcal{A}\), there exists \(h(w,a,y)\) such that 
{\small
\[
f_{Y \mid Z,A}(y \mid z, a) 
= \int h(w, a, y)\, f_{W \mid Z,A}(w \mid z, a) \, dw < +\infty.
\]
}
\end{assumption}

It is important to emphasize that while this approach identifies the interventional distribution \(f_{Y(a)}(y)\), it does not, in general, identify functionals that depend on $U$—for instance, the conditional distribution \(f_{Y \mid U,A}(y \mid u, a)\) or the marginal distribution \(f_U(u)\) of the unobserved variable. This stands in contrast to the array decomposition approach discussed in Section~\ref{sec:3}.

\subsection{Completeness Conditions}\label{subsec:2b}

Assumption~\ref{assump:completeness1} is a generalized notion of linear independence in finite-dimensional vector spaces. To illustrate this idea, suppose for the time being that the model depicted in Figure~\ref{fig:2} is fully discrete. Then, Assumption~\ref{assump:completeness1} means that the conditional distributions in $\{f_{Z \mid U, A}(z \mid u, a): u \in \mathcal{U}\}$ are a finite set of linearly independent vectors for each \(a \in \mathcal{A}\). By Bayes' rule, this is equivalent to linear independence of distributional vectors in $\{f_{U \mid Z, A}(u \mid z, a): u \in \mathcal{U}\}$ for each \(a \in \mathcal{A}\). 

To make the connection with linear algebra explicit, 
define $\mathbf P_{U\mid Z,a}\in\mathbb R^{|\mathcal U|\times|\mathcal Z|}$ as in Appendix~\ref{app:matrix_def} with entries \(\big(\mathbf P_{U\mid Z,a}\big)_{ij}
\;:=\;
\Pr(U=u_i \mid Z=z_j, A=a).\)
Then we have

{\scriptsize
\begin{equation}\label{eq:expansion}
\begin{aligned}
&\begin{bmatrix}
\mathbb{E}[g(u) \mid z_1, a] \\ 
\mathbb{E}[g(u) \mid z_2, a] \\ 
\vdots \\
\mathbb{E}[g(u) \mid z_{|\mathcal{Z}|}, a]  
\end{bmatrix}
= 
\underbrace{
\begin{bmatrix}
\Pr(u_1 \mid z_1, a) & \cdots & \Pr(u_{|\mathcal{U}|} \mid z_1, a) \\ 
\Pr(u_1 \mid z_2, a) & \cdots & \Pr(u_{|\mathcal{U}|} \mid z_2, a) \\ 
\vdots & \ddots & \vdots \\ 
\Pr(u_1 \mid z_{|\mathcal{Z}|}, a) & \cdots & \Pr(u_{|\mathcal{U}|} \mid z_{|\mathcal{Z}|}, a) 
\end{bmatrix}
}_{\mathbf{P}_{U \mid Z, a}^T}
\begin{bmatrix}
g(u_1) \\
g(u_2) \\
\vdots \\
g(u_{|\mathcal{U}|})
\end{bmatrix}.
\end{aligned}
\end{equation}
}

Hence, Assumption~\ref{assump:completeness1} holds if \(\mathbf{P}_{U \mid Z,a}\) is right invertible. As noted in \citet{semiparametric2020cui}, \(Z\) must have at least as many categories as \(U\) in order for Assumption~\ref{assump:completeness1} to hold. 

Assumption~\ref{assump:completeness1} extends the notion of linear independence from finite-dimensional vector spaces to infinite-dimensional settings, where it is known as a \emph{completeness} condition. Completeness conditions are notoriously abstruse, but many commonly used statistical families are known to satisfy Assumption~\ref{assump:completeness1}. For example, the assumption holds when, for each \( a \in \mathcal{A} \), the conditional distributions 
\(
\{ f_{U \mid Z,A}(u \mid z, a) : z \in \mathcal{Z} \}
\)
form an exponential family indexed by \(z\), provided that \(z\) has support on an open interval of the real line \citep{Brown1986ExponentialFamilies, newey03instrumental}. At first, this formulation may appear more natural when \(Z\) is viewed as causing \(U\). However, if the family 
\(
\{ f_{U \mid Z,A}(u \mid z, a) : z \in \mathcal{Z} \}
\)
is specified as a regular exponential family indexed by \(z\), then the corresponding conditional distributions 
\(
\{ f_{Z \mid U,A}(z \mid u, a) : u \in \mathcal{U} \}
\)
belong to the associated conjugate exponential family under standard conjugacy relationships. 

\citet{Hu_Completeness} provide additional characterizations of nonparametric families of complete distributions, including settings in which \(U\) is generated as an arbitrary function of \(Z\) with independent additive noise. In general,  Assumption~\ref{assump:completeness1} is not verifiable \citep{Canay_testability2013}. However, as long as \(Z\) has larger dimension than \(U\), any distribution violating Assumption~\ref{assump:completeness1} can be approximated arbitrarily well by distributions that satisfy them \citep{semiparametric2020cui, Canay_testability2013}.

\subsection{Solution to a Bridge Equation}\label{subsec:2c}

The integral equation in Assumption~\ref{assump:bridge1}, referred to as a \emph{bridge equation}, enables recovery of the counterfactual distribution $f_{Y(a)}(y)$ alongside prior assumptions. Since the existence of a solution is usually assumed point blank, leaving researchers no way to reason about when this assumption is reasonable, we unpack sufficient conditions for and implications of this assumption.

In a fully discrete model, encode conditional distributions in matrices $\mathbf{P}_{Y \mid Z,a}, \mathbf{P}_{W \mid Z,a}, \mathbf{P}_{Y \mid U,a}$, and $\mathbf{P}_{W \mid U}$ (for details, see Appendix~\ref{app:matrix_def}). In this case, Assumption~\ref{assump:bridge1} is equivalent to the existence of a matrix \(\mathbf{H}\) such that
\begin{equation}\label{eq:discrete_bridge1}
\mathbf{P}_{Y \mid Z,a} \;=\; \mathbf{H}\,\mathbf{P}_{W \mid Z,a}.
\end{equation}

A new completeness condition, Assumption~\ref{assump:7a}, is sufficient for Assumption~\ref{assump:bridge1} in the discrete setting. 

\renewcommand{\theassumption}{7\alph{assumption}}
\setcounter{assumption}{0}

\begin{assumption}\label{assump:7a}
For each $a \in \mathcal{A}$ and every square-integrable function $g$,
\[
\mathbb{E}[g(Z) \mid W, A=a] = 0 \ \text{a.s.}
\quad \Longleftrightarrow \quad
g(Z) = 0 \ \text{a.s.}
\]
\end{assumption}

In the fully discrete case, this condition implies that $\mathbf{P}_{W \mid Z,a}$ admits a left inverse $\mathbf{P}_{W \mid Z,a}^{\dagger}$, so that
\begin{equation}\label{eq:discrete_bridge2}
\mathbf{P}_{Y \mid Z,a}
=
\mathbf{P}_{Y \mid Z,a}\,
\mathbf{P}_{W \mid Z,a}^{\dagger}\,
\mathbf{P}_{W \mid Z,a},
\end{equation}
which yields \(\mathbf{H}=\mathbf{P}_{Y \mid Z,a}\,
\mathbf{P}_{W \mid Z,a}^{\dagger}\) in Equation~\ref{eq:discrete_bridge1}. This left invertibility condition requires $W$ to have at least as many categories as $Z$.

Via the independences in Assumption~\ref{assump:independence1}, we can decompose Equation~\ref{eq:discrete_bridge2} as such:
\begin{equation}\label{eq:discrete_bridge3}
\mathbf{P}_{Y \mid U,a}\mathbf{P}_{U \mid Z,a}
=
\mathbf{P}_{Y \mid Z,a}\,
\mathbf{P}_{W \mid Z,a}^{\dagger}\,
\mathbf{P}_{W \mid U}\mathbf{P}_{U \mid Z,a}
\end{equation}

Under Assumption~\ref{assump:completeness1}, the matrix 
$\mathbf{P}_{U \mid Z,a}$ is right invertible. Consequently,
\[
\mathbf{P}_{Y \mid U,a}
=
\mathbf{P}_{Y \mid Z,a}\,
\mathbf{P}_{W \mid Z,a}^{\dagger}\,
\mathbf{P}_{W \mid U}.
\]

Taking the inner product of both sides with the marginal vector $f_U$, the left-hand side yields entries
\(
f_{Y(a)}(y) = \sum_u f_{Y \mid U,A}(y \mid u,a)\, f_U(u)
\)
and the right-hand side no longer relies on $u$ (since
$\sum_u f_{W \mid U}(w \mid u) f_U(u) = f_W(w)$).

In continuous models, Assumptions~\ref{assump:7b}--\ref{assump:7d} in addition to Assumption~\ref{assump:7a} are sufficient to ensure that Assumption~\ref{assump:bridge1} holds \citep{miao2018identifying}.

\begin{assumption}\label{assump:7b}
For each \(a \in \mathcal{A}\), \( \iint f_{W \mid Z,A}(w \mid z, a)\, f_{Z \mid W,A}(z \mid w, a) \, dw \, dz < +\infty. \)
\end{assumption}

\begin{assumption}\label{assump:7c}
For each \(a \in \mathcal{A}\),
\[\int f^2_{Y \mid Z,A}(y \mid z, a)\, f_{Z \mid A}(z \mid a) \, dz < +\infty.\]
\end{assumption}

\begin{assumption}\label{assump:7d}
For each \(a \in \mathcal{A}\), \[\sum_{n=1}^{\infty} \lambda_n^{-2} \left| \langle f_{Y \mid Z,A}(y \mid z, a), \psi_n \rangle \right|^2 < +\infty, \]
where \( (\lambda_n, \varphi_n, \psi_n)_{n=1}^{\infty}\) is the singular value decomposition of the operator \[\mathbf{K} : L^2\{F(w \mid a)\} \mapsto L^2\{F(z \mid a)\},\] defined by
\(\mathbf{K} h = \mathbb{E}\{h(w) \mid z, a\}\) for \( h \in L^2\{F(w \mid a)\} \). 
\end{assumption}

\renewcommand{\theassumption}{\arabic{assumption}}
\setcounter{assumption}{7}

Assumption~\ref{assump:7b} ensures that the operator \(\mathbf{K}\) is compact, exhibiting behavior similar to finite matrices. Assumption~\ref{assump:7c} places a square-integrability condition on \(f_{Y \mid Z,A}(y \mid z,a)\), and Assumption~\ref{assump:7d} controls the behavior of \(f_{Y \mid Z,A}(y \mid z,a)\) to lie in the range of \(\mathbf{K}\). Further discussion of these analytic conditions can be found in  \citet{kress1989}.

\section{%Proximal Causal Learning
Causal Target Identification With Proxies via Array Decomposition}\label{sec:3}

Another prominent strategy for identifying counterfactual quantities with proxies of hidden variables employs array decomposition to recover the full law of hidden and observed variables, up to the labels of hidden variable values (e.g., \citet{kuroki14measurement, allman2015parameter, zhou2024causalinferencehiddentreatment}). We begin by reviewing this framework in the fully discrete hidden confounding model of \citet{kuroki14measurement}, consistent with the conditional independences depicted in Figure~\ref{fig:3}. In Appendix~\ref{app:KP_mediation}, we provide analogous assumptions for recovering causal effects in a discrete model with proxies of a hidden mediator. We defer discussion of continuous extensions to Section~\ref{subsec:3c}.

Although the graph in Figure~\ref{fig:3} does not impose discreteness, the array decomposition result from \citet{kuroki14measurement} is most transparently understood in a fully discrete version of this model. In this setting, with \(Z\) and \(W\) serving as proxies of the unobserved confounder \(U\), \citet{kuroki14measurement} show that the latent components \(f_{Y \mid U, A}(y \mid u_i, a)\) and \(f_U(u_i)\) are jointly identifiable up to a permutation of the latent categories of \(U\). Once these components are recovered, the counterfactual distribution \(f_{Y(a)}(y)\) may be derived from Equation~\ref{eq:adjustment}.

\begin{figure}[b]
         \centering
        \begin{tikzpicture}[>=latex, node distance=2.3cm, anchor=center]
            % Define nodes
            \node (A) [draw, circle] {A};
            \node (Y) [draw, circle, right of=A] {Y};
            \node (U) [draw, circle, fill=red!30, above of=A, xshift=1.2cm] {U};
            \node (W) [draw, circle, right of=U] {W};
            \node (Z) [draw, circle, left of=U] {Z};

            % Draw original edges
            \draw[->] (A) -- (Y);
            \draw[->] (U) -- (Z);
            \draw[->] (U) -- (W);
            \draw[->] (U) -- (Y);
            \draw[->] (U) -- (A);
            \draw[->, dashed] (Z) -- (A);
        \end{tikzpicture}
        \caption{
    \centering
    Kuroki and Pearl (2014); \\
    Supports
    \(\mathcal{A}, \mathcal{Y}, \mathcal{U}, \mathcal{W}, \mathcal{Z}\)}
        \label{fig:3}
     \hfill
\end{figure}

\citet{kuroki14measurement} use the following assumptions, adopting the matrix notation $\mathbf{P}_{V \mid T}$ for encoding conditional distributions (see Appendix~\ref{app:matrix_def}),
which we use throughout the rest of the paper.

\begin{assumption}\label{assump:ind1}
\(W \perp\!\!\!\perp A \mid U.\)
\end{assumption}

\begin{assumption}\label{assump:ind2}
\[W, Z, Y \text{ are mutually independent 
conditional on }  U,A.\]
\end{assumption}

\begin{assumption}\label{assump:inv1}
For each \(a \in \mathcal{A}\), the matrices \(\mathbf{P}_{Z \mid U, a}\) and \(\mathbf{P}_{W \mid U}\) are invertible.
\end{assumption}

\begin{assumption}\label{assump:distinct_diag}
For \(a \in \mathcal{A}\) and \(y \in \mathcal{Y}\), $\Pr(Y = y \mid U = u_k, A = a)
\neq
\Pr(Y = y \mid U = u_{j}, A = a), \quad k \neq j.$
\end{assumption}

We provide a short summary of the identification proof given in \citet{kuroki14measurement} in Appendix~\ref{app:discrete_kp}. Assumption~\ref{assump:inv1} can be relaxed to left invertibility of \(\mathbf{P}_{Z \mid U, a}\) and \(\mathbf{P}_{W \mid U}\), allowing cases where \(Z\) and \(W\) each have at least as many levels as \(U\). Perhaps surprisingly, Assumption~\ref{assump:inv1} can even be relaxed to allow settings where both proxies \(Z\) and \(W\) have fewer levels than the latent variable. These generalizations are discussed in Section~\ref{subsec:3b}. 

Techniques for identifying causal targets via array decomposition, presented in works including \citet{kuroki14measurement}, \citet{allman2015parameter}, and  \citet{zhou2024causalinferencehiddentreatment} can be viewed as special cases of Kruskal’s uniqueness theorem for the Candecomp/Parafac (CP) decomposition \citep{kruskal1977three}, which we discuss in Section~\ref{subsec:3b}. As a consequence, any factor of the full law of hidden and observed variables can be recovered in these settings up to latent category labels. A natural extension of these ideas to continuous settings is explored in Section~\ref{subsec:3c}.

\subsection{Discrete Hidden Variable Models}\label{subsec:3b}

The array decomposition approach in discrete models is a special case of Kruskal’s uniqueness theorem for the Candecomp/Parafac (CP) decomposition of a three-way array (i.e., third-order tensor), as noted in \citet{allman2015parameter}. The theorem gives general conditions for uniqueness of a three-way array decomposition. Applied to the distribution of the observed variables, it provides conditions under which any factorization of the full law of hidden and observed variables is identified up to latent state labels, and rules out representations with a lower-dimensional latent variable. 

The remainder of this section presents the theorem and examines how its assumptions relate to those of Kuroki and Pearl (2014). A concise proof of the theorem is provided by \citet{rhodes20101818}, and the theorem is discussed at length in \citet{stegeman2007kruskal} and \citet{Derksen2013}.

\begin{definition}[CP Decomposition]\label{def:cp_decomp}
The rank-$R$ CP decomposition of a three-way array 
\(\boldsymbol{\mathcal{T}} \in \mathbb{R}^{M \times N \times J}\) 
consists of factor matrices 
\(\mathbf{A} \in \mathbb{R}^{M \times R}\), 
\(\mathbf{B} \in \mathbb{R}^{N \times R}\), and 
\(\mathbf{C} \in \mathbb{R}^{J \times R}\) such that, for each 
\(j = 1,\dots,J\), the matrix obtained by fixing the third index of 
\(\boldsymbol{\mathcal{T}}\) at \(j\),
\[
\boldsymbol{\mathcal{T}}_j := \boldsymbol{\mathcal{T}}(:,:,j),
\]
admits the representation \citep{stegeman2007kruskal, kruskal1977three}
\begin{equation}\label{eq:decomp}
\boldsymbol{\mathcal{T}}_j
=
\mathbf{A}\mathbf{C}_j\mathbf{B}^\top,
\end{equation}
where \(\mathbf{C}_j\) is a diagonal matrix whose diagonal entries are 
given by the \(j^{\text{th}}\) row of \(\mathbf{C}\).
\end{definition}

\begin{definition}[Essential Uniqueness] 
Three-way array \(\boldsymbol{\mathcal{T}}\) is essentially unique if its CP decomposition is uniquely determined up to permutation and scaling of the columns of \(\mathbf{A}\), \(\mathbf{B}\), and \(\mathbf{C}\).  
\end{definition}  

\begin{definition}[Three-Way Rank]\label{def:three_way_rank} 
The three-way rank of \(\boldsymbol{\mathcal{T}}\) is the smallest integer \(R\) for which a CP decomposition of the form in Equation~\eqref{eq:decomp} exists. 
\end{definition} 

\begin{definition}[Kruskal Rank ($k$-rank)]  
The Kruskal rank (or $k$-rank) of a matrix \(\mathbf{M}\), denoted \(k_M\), is the largest integer \(k\) such that any subset of \(k\) columns of \(\mathbf{M}\) is linearly independent.
\end{definition}  

\begin{theorem}[Kruskal's Uniqueness Theorem]\label{thm:kruskal}
A three-way array \(\boldsymbol{\mathcal{T}}\) has three-way rank R, and its rank-$R$ CP decomposition is essentially unique, if: 
\begin{equation}\label{eq:inequality}
2R + 2 \leq k_A + k_B + k_C,
\end{equation}
where \(k_A\), \(k_B\), and \(k_C\) are the $k$-ranks of \(\mathbf{A}\), \(\mathbf{B}\), and \(\mathbf{C}\), respectively.
\end{theorem}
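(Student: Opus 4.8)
The plan is to prove sufficiency by comparing two CP decompositions of the same three-way array and showing that they must agree up to the permitted permutation and scaling; the assertion that the three-way rank equals $R$ will drop out of the same argument, since the matching of columns it produces rules out any reproduction of $\boldsymbol{\mathcal{T}}$ with fewer than $R$ rank-one terms. So I would begin by assuming $(\mathbf{A},\mathbf{B},\mathbf{C})$ and $(\bar{\mathbf{A}},\bar{\mathbf{B}},\bar{\mathbf{C}})$ are two factorizations, each with $R$ columns and no zero columns (a zero column would lower the number of terms), both satisfying the slicewise identity $\boldsymbol{\mathcal{T}}_j=\mathbf{A}\mathbf{C}_j\mathbf{B}^T=\bar{\mathbf{A}}\bar{\mathbf{C}}_j\bar{\mathbf{B}}^T$. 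The first move is to pass from individual slices to arbitrary combinations along the third mode: for any vector $\mathbf{x}$, contracting gives
\[
\mathbf{M}(\mathbf{x}) := \mathbf{A}\,\mathrm{diag}(\mathbf{C}^T\mathbf{x})\,\mathbf{B}^T = \bar{\mathbf{A}}\,\mathrm{diag}(\bar{\mathbf{C}}^T\mathbf{x})\,\bar{\mathbf{B}}^T .
\]
This single matrix, produced identically by both factorizations for every $\mathbf{x}$, is the bridge between the combinatorics of the third factor and the linear algebra of the first two.

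The key quantitative lemma I would prove next links $\mathrm{rank}\,\mathbf{M}(\mathbf{x})$ to the sparsity of $\mathbf{C}^T\mathbf{x}$. Writing $\mathrm{nz}(\mathbf{v})$ for the number of nonzero entries of $\mathbf{v}$, one always has $\mathrm{rank}\,\mathbf{M}(\mathbf{x})\le \mathrm{nz}(\mathbf{C}^T\mathbf{x})$, since vanishing entries of $\mathbf{C}^T\mathbf{x}$ delete the corresponding rank-one terms. The crucial point is that equality holds whenever $\mathrm{nz}(\mathbf{C}^T\mathbf{x})\le \min(k_A,k_B)$: under this bound the surviving columns of $\mathbf{A}$ and of $\mathbf{B}$ are linearly independent by the very definition of Kruskal rank, so no cancellation occurs and the rank equals $\mathrm{nz}(\mathbf{C}^T\mathbf{x})$ exactly. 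Running the same computation on the barred factors, equating the two expressions for $\mathrm{rank}\,\mathbf{M}(\mathbf{x})$, yields a comparison of zero patterns valid in the sparse regime: $\mathbf{C}^T\mathbf{x}$ has at least as many zero entries as $\bar{\mathbf{C}}^T\mathbf{x}$.

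The final and central step is Kruskal's permutation lemma: if $\mathbf{C}$ has Kruskal rank $k_C$ and, for every $\mathbf{x}$ with $\bar{\mathbf{C}}^T\mathbf{x}$ carrying at most $k_C-1$ nonzero entries, the vector $\mathbf{C}^T\mathbf{x}$ has at least as many zeros as $\bar{\mathbf{C}}^T\mathbf{x}$, then $\bar{\mathbf{C}}=\mathbf{C}\mathbf{\Pi}\mathbf{\Lambda}$ for a permutation $\mathbf{\Pi}$ and nonsingular diagonal $\mathbf{\Lambda}$. This is exactly where the hypothesis $2R+2\le k_A+k_B+k_C$ is consumed: one must verify that the regime $\mathrm{nz}(\mathbf{C}^T\mathbf{x})\le\min(k_A,k_B)$, in which the rank-counting lemma supplies the needed zero comparison, in fact covers every $\mathbf{x}$ for which $\bar{\mathbf{C}}^T\mathbf{x}$ has fewer than $k_C$ nonzeros, so that the permutation lemma's hypothesis is genuinely established rather than assumed. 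Because the inequality is symmetric in the three factors, I would then cyclically permute the roles of $\mathbf{A}$, $\mathbf{B}$, $\mathbf{C}$ to conclude the analogous statements for the other two factors; a short consistency check shows the three permutations coincide and the three diagonal scalings multiply to the identity, which is precisely essential uniqueness. Since the matching is one-to-one among the $R$ columns of each factor, no shorter decomposition exists, so the three-way rank is exactly $R$.

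I expect the main obstacle to be the permutation lemma and the counting that feeds it. Proving the lemma from scratch is a genuinely delicate combinatorial argument about the zero patterns of vectors in a row space, and translating the rank comparison into precisely the hypothesis the lemma requires—obtaining the sharp constant $2R+2$ rather than something weaker—demands careful reconciliation of the sparsity regime controlled by $k_A$ and $k_B$ with the sparsity regime $k_C-1$ over which the lemma quantifies. A cleaner alternative I would consider, following the streamlined treatment in \citet{rhodes20101818}, is to bypass the full combinatorial lemma by arguing directly at the level of the matrices $\mathbf{M}(\mathbf{x})$ that each column of $\bar{\mathbf{A}}$ is proportional to some column of $\mathbf{A}$ via a dimension count, shortening the matching step at the cost of a slightly more intricate rank argument.
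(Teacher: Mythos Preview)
The paper does not supply its own proof of this theorem: it states the result as background, attributes it to \citet{kruskal1977three}, and points the reader to \citet{rhodes20101818} for ``a detailed and concise proof,'' remarking only that the argument is nonconstructive and can be viewed in terms of uniquely determined eigenspaces. There is therefore nothing in the paper to compare your proposal against at the level of proof steps.

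That said, your outline is a faithful sketch of the classical Kruskal argument: the contraction $\mathbf{M}(\mathbf{x})=\mathbf{A}\,\mathrm{diag}(\mathbf{C}^{\top}\mathbf{x})\,\mathbf{B}^{\top}$, the rank--sparsity lemma tying $\mathrm{rank}\,\mathbf{M}(\mathbf{x})$ to the number of nonzeros of $\mathbf{C}^{\top}\mathbf{x}$ in the regime governed by $\min(k_A,k_B)$, and then Kruskal's permutation lemma to force $\bar{\mathbf{C}}=\mathbf{C}\mathbf{\Pi}\mathbf{\Lambda}$, with the inequality $2R+2\le k_A+k_B+k_C$ consumed exactly where you indicate. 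Your closing remark about the \citet{rhodes20101818} streamlining is in fact the reference the paper itself recommends, so if anything you have converged on the paper's preferred source while providing substantially more detail than the paper does.
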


Under the approach of Kuroki and Pearl (2014) described at the beginning of Section~\ref{sec:3}, let  
\[
\mathbf{A} = \mathbf{P}_{W \mid U}, \quad 
\mathbf{B} = \mathbf{P}_{Z, U \mid a} = \mathbf{P}_{Z \mid U,a}\,\operatorname{diag}(f_{U \mid a}),\]  
\[
\text{ and }\mathbf{C} = \mathbf{P}_{Y \mid U, a},
\]
where $\operatorname{diag}(f_{U \mid a})$ denotes the diagonal matrix with $(i,i)$-th entry $\Pr(U = u_i \mid A = a)$.
In Definition~\ref{def:cp_decomp}, this yields  
\[
\boldsymbol{\mathcal{T}}_j 
= \mathbf{A}\,\mathbf{C}_j\,\mathbf{B}^{\top} =
\mathbf{P}_{y_j, W, Z \mid, a} = \]
\[
\big[\,f_{Y,W,Z \mid A}(y_j, w, z_1 \mid a), \;\dots,\; f_{Y,W,Z \mid A}(y_j, w, z_{|\mathcal{Z}|} \mid a)\,\big].
\]

By Assumption~\ref{assump:inv1}, both \(\mathbf{A}\) and \(\mathbf{B}\) are invertible, which implies \(k_A = k_B = R = |\mathcal{U}|\). Furthermore, Assumption~\ref{assump:distinct_diag} ensures that \(k_C \geq 2\), since the columns of \(\mathbf{C}\) cannot be collinear; the conditional distributions \(f_{Y \mid U,A}(y \mid u, a)\) differ across values of latent variable \(U\). Hence \(2R + 2 \leq k_A + k_B + k_C\). 

Invoking Theorem~\ref{thm:kruskal}, identification of latent factors 
\[f_{W \mid U}(w \mid u_i), f_{Y \mid U,A}(y \mid u_i, a), \text{ and }f_{U,Z \mid A}(u_i, z \mid a)\] together reduces to solving the tensor decomposition separately for each \(a \in \mathcal{A}\). 
The decomposition identifies the latent categories only up to a permutation, so the index \(i\) serves merely as a label for the latent state \(u_i\). Since component \(f_{W \mid U}(w \mid u_i) = f_{W \mid U}(w \mid u_i, a )\) for all values of \(a \in \mathcal{A}\), labels can be aligned across treatment levels. Hence, \(f_{A,Y,U,W,Z}(a,y,u_i,w,z)\) is identified up to a permutation of the latent labels \(i\).

The original proof of Theorem~\ref{thm:kruskal} is nonconstructive \citep{kruskal1977three}, and can be recast in terms of uniquely determined eigenspaces \citep{rhodes20101818}. In the special setting considered by \citet{ kuroki14measurement}, the authors showed latent components can be obtained directly via eigendecomposition tasks. Modern methods for recovering latent factors under the general assumptions of Theorem~\ref{thm:kruskal} often rely on iterative procedures such as alternating least squares or gradient-based optimization \citep{ALS,gradient_descent}.

Notably, the inequality in Theorem~\ref{thm:kruskal} treats \(W\), \(Z\), and \(Y\) symmetrically. Consequently, the condition may still hold even if one observed variable (e.g., \(W\)) has fewer levels than the latent variable \(U\), provided that the other two variables (\(Z\) and \(Y\)) compensate. However, a necessary requirement for the inequality to hold is that \(W\), \(Z\), and \(Y\) together have at least \(2|\mathcal{U}| + 2\) distinct categories.

The CP decomposition framework is typically introduced in the simpler setting of a single latent variable with three observed variables that are mutually independent conditional on the latent (e.g., Figure~\ref{fig:3} omitting \(A\)). The three-way array formed by the observed variables uniquely determines the full law up to latent state labels under the condition in Theorem~\ref{thm:kruskal}. We now turn to how this framework extends to continuous models.

\subsection{Continuous Hidden Variable Models}\label{subsec:3c}

When the variables in Figure~\ref{fig:3} (except treatment \(A\)) are continuous, Theorem~1 of \citet{hu2008instrumental} provides a natural analogue of the array decomposition framework for uniquely determining the full law up to latent state labels. Originally developed for a setting with only three observed variables, \citet{deaner2023controllinglatentconfoundingtriple} extends the \citet{hu2008instrumental} result to the causal model in Figure~\ref{fig:3}, identifying the full law \(f_{A,Y,U,W,Z}(a,y,u_i,w,z)\) and thereby counterfactual distribution \(f_{Y(a)}(y)\).

The extension relies on the conditional independences in Assumptions~\ref{assump:ind1}–\ref{assump:ind2}, the regularity condition in Assumption~\ref{assump:KP_bounded}, and additional conditions Assumptions~\ref{assump:completeness4} and~\ref{assump:noncollinearity2}.

\begin{assumption}\label{assump:KP_bounded}
\((W, Z, Y, U, A)\) admit a joint density that is bounded and nonzero.
\end{assumption}

\begin{assumption}\label{assump:completeness4}
    For each \(a \in \mathcal{A}\) and \(g\) square-integrable, \(\mathbb{E}[g(u) \mid w, a] = 0  \,\text{ a.s.} \iff g(u) = 0  \,\text{ a.s.}\) and 
    \(\mathbb{E}[g(u) \mid z, a] = 0  \,\text{ a.s.} \iff g(u) = 0  \,\text{ a.s.}\)
\end{assumption}

\begin{assumption}\label{assump:noncollinearity2}
For each \(a \in \mathcal{A}\) and \(i \neq j\), the set
\[
\{y: f_{Y \mid U,A}(y \mid u_i, a) \neq f_{Y \mid U,A}(y \mid u_j, a)\}
\]
has positive probability under the marginal of $Y \mid A=a$.
\end{assumption}

Recalling the relationship between completeness and invertibility discussed in Section~\ref{subsec:2b}, Assumption~\ref{assump:completeness4} is analogous to Assumption~\ref{assump:inv1} in \citet{kuroki14measurement}. Additionally, Assumption~\ref{assump:noncollinearity2} is analogous to Assumption~\ref{assump:distinct_diag}. Thus, the results of \citet{hu2008instrumental} and \citet{deaner2023controllinglatentconfoundingtriple} can be viewed as natural extensions of \citet{kuroki14measurement} to the continuous setting.

\subsection{Identifying Latent Labels}\label{subsec:3d}
In the hidden confounding setting considered in Figure~\ref{fig:3}, label recovery for the latent confounder \(U\) is not required for identification of counterfactual distributions or causal effects. The confounder is not the target of intervention; it is integrated out in the adjustment formula (Equation~\ref{eq:adjustment}, with $U$ in place of $C$). Any relabeling of the latent categories of \(U\) leaves the functional (i.e.,
\(
\int f_{Y \mid A,U}(y \mid a, u)\, f_U(u)\, du
\))
unchanged.

In contrast, outside of the hidden confounding context, it is often necessary to impose additional conditions in order to recover the labels of the latent variable. For example, when treatment \(A\) itself is unobserved but proxies of it are available, allowing identification of $f_{Y \mid A,C}(y \mid a_i, c)$ \citep{zhou2024causalinferencehiddentreatment}, recovering which $i$ corresponds to \(a_1\) versus \(a_0\) is essential to identify a specific counterfactual distribution
\[
f_{Y(a_i)}(y) 
= \int f_{Y \mid A,C}(y \mid a_i, c)\, f_C(c)\, dc,
\]
because intervention is defined relative to a particular treatment level.

\subsubsection{Conditional Unbiasedness}\label{subsec:3di}

\citet{hu2008instrumental} show that in the model of Figure~\ref{fig:3}, under Assumptions~\ref{assump:ind1}--\ref{assump:ind2} and~\ref{assump:KP_bounded}--\ref{assump:noncollinearity2}, latent state labels can be recovered under an unbiasedness condition: a known functional of an observed variable (e.g., a measure of central tendency) evaluated conditional on each latent value equals the latent value itself (Assumption~\ref{assump:unbiasedness}).

\begin{assumption}\label{assump:unbiasedness}  
There exists a known functional \( M \) such that for each \( u \in \mathcal{U} \),
\[
M[f_{W \mid U}(w \mid u)] = u .
\]
\end{assumption}

\subsubsection{Conditional Monotonicity}\label{subsec:3dii}

When \(U\) is ordinal and its support known, an alternative condition, Assumption~\ref{assump:monotonicity2}, suffices. Assumption~\ref{assump:monotonicity2} requires that a known functional of an observed variable is strictly ordered across latent categories; since the full joint law is identified, this ordering determines the labels.

\begin{assumption}\label{assump:monotonicity2}
\(U\) is ordinal with known support and there exists a known functional \( M \) such that, for all \( u_i < u_j \)
\[
M[f_{W \mid U}(w \mid u_i)] < M[f_{W \mid U}(w \mid u_j)].\]
\end{assumption}

In both Assumption~\ref{assump:unbiasedness} and~\ref{assump:monotonicity2}, \(W\) may be replaced by \(Z\) or \(Y\), provided the condition holds additionally conditional on each \(a \in \mathcal{A}\).

\section{Model Comparison}\label{sec:4}

The array decomposition approach in Section~\ref{sec:3} can be used to recover counterfactual quantities across a broad class of hidden variable models. For example, \citet{zhou2024causalinferencehiddentreatment} apply the method to hidden treatment models. This flexibility underscores a key strength of the array decomposition framework. Moreover, the two proximal strategies for hidden confounding invoke distinct conditional independence restrictions, which are non-nested \citep{deaner2023controllinglatentconfoundingtriple}. 

We now compare the two approaches under shared conditional independences compatible with both frameworks. Specifically, we consider the hidden confounding model represented in Figure~\ref{fig:3}, i.e., reflecting the conditional independences in Assumptions~\ref{assump:ind1} and~\ref{assump:ind2}.

\begin{table*}[tb]
    \centering
    \caption{Comparison for discrete hidden confounding model in Figure~\ref{fig:3}.$^{\ast}$}
    \label{tab:discrete}
    \renewcommand{\arraystretch}{1.35}
    \setlength{\tabcolsep}{6pt}
    \small
    \begin{tabularx}{\linewidth}{|X|X|X|}
        \hline
        \textbf{Bridge Equation Approach$^{\dagger}$} & 
        \textbf{Array Decomposition Approach$^{\ddagger}$} & 
        \textbf{Overlap} \\
        \hline
        \multicolumn{3}{|c|}{\textbf{Assumption Set 1}} \\
        \hline
        \begin{titemize}[leftmargin=*,nosep]
            \item $\mathbf{P}_{Z \mid U, a}$ is left invertible
            \item $\exists \mathbf{H}$ s.t. 
            $\mathbf{P}_{Y \mid Z,a} = \mathbf{H}\,\mathbf{P}_{W \mid Z,a}$
        \end{titemize}
        &
        \begin{titemize}[leftmargin=*,nosep]
            \item $\mathbf{P}_{Z \mid U, a}^{-1}$ and $\mathbf{P}_{W \mid U}^{-1}$ exist
            \item Some row of $\mathbf{P}_{Y \mid U,a}$ has distinct entries
        \end{titemize}
        &
        \begin{titemize}[leftmargin=*,nosep]
            \item $\mathbf{P}_{Z \mid U, a}^{-1}$ and $\mathbf{P}_{W \mid U}^{-1}$ exist
            \item Some row of $\mathbf{P}_{Y \mid U,a}$ has distinct entries
        \end{titemize}
        \\
        \hline
        \multicolumn{3}{|c|}{\textbf{Assumption Set 2}} \\
        \hline
        \begin{titemize}[leftmargin=*,nosep]
            \item $\mathbf{P}_{Z \mid U, a}$ is left invertible
            \item $\exists \mathbf{H}$ s.t. 
            $\mathbf{P}_{Y \mid Z,a} = \mathbf{H}\,\mathbf{P}_{W \mid Z,a}$
        \end{titemize}
        &
        \begin{titemize}[leftmargin=*,nosep]
            \item $2|\mathcal{U}| + 2 \leq k_A + k_B + k_C$
        \end{titemize}
        &
        \begin{titemize}[leftmargin=*,nosep]
            \item $\mathbf{P}_{Z \mid U, a}$ is left invertible 
            \newline $\implies k_B = |\mathcal{U}|$
            \item $\exists \mathbf{H}$ s.t. 
            $\mathbf{P}_{Y \mid Z,a} = \mathbf{H}\,\mathbf{P}_{W \mid Z,a}$
            \item $|\mathcal{U}| + 2 \leq k_A + k_C$
        \end{titemize}
        \\
        \hline
        \multicolumn{3}{|c|}{\textbf{Identified Targets}} \\
        \hline
        \begin{titemize}[leftmargin=*,nosep]
            \item $f_{Y(a)}(y)$
        \end{titemize}
        &
        \begin{titemize}[leftmargin=*,nosep]
            \item $f_{A,Y,U,W,Z}(a,y,u,w,z)$ up to latent labels
            \item $f_{Y(a)}(y)$
        \end{titemize}
        &
        \begin{titemize}[leftmargin=*,nosep]
            \item $f_{A,Y,U,W,Z}(a,y,u,w,z)$ up to latent labels
            \item $f_{Y(a)}(y)$
        \end{titemize}
        \\
        \hline
    \end{tabularx}

    \vspace{0.4em}
    \begin{minipage}{0.95\linewidth}
    \footnotesize
    $^{\ast}$ All assumptions are stated for each \(a \in \mathcal{A}\). \\
    $^{\dagger}$ Bridge equation approach assumptions are from \citet{miao2018identifying}. \\
    $^{\ddagger}$ Array decomposition approach assumptions are as follows:  
    Set 1 from \citet{kuroki14measurement};  
    Set 2 from \citet{kruskal1977three}. 
    \end{minipage}
\end{table*}

\subsection{Discrete Model Comparison}
For discrete models, we compare Assumption~\ref{assump:completeness1} and~\ref{assump:bridge1} from the bridge equation approach in Miao et al.~(2018) to two alternative assumption sets (Table~\ref{tab:discrete}). Set 1 is from \citet{kuroki14measurement}; Set 2 is from 
\citet{kruskal1977three}, the generalization of Set 1 via Theorem~\ref{thm:kruskal}.

Comparing to Set 1, all model overlap occurs under the assumptions of \citet{kuroki14measurement}. In particular, the invertibility conditions in Assumption~\ref{assump:inv1} from \citet{kuroki14measurement}---namely, that $\mathbf{P}_{Z \mid U,a}^{-1}$ and $\mathbf{P}_{W \mid U}^{-1}$ exist---also imply the existence of $\mathbf{P}_{W \mid Z,a}^{-1}$. This directly satisfies Assumptions~\ref{assump:completeness1} and~\ref{assump:bridge1} in \citet{miao2018identifying}, since one can define  
\[
\mathbf{H} = \mathbf{P}_{Y \mid Z,a}\,\mathbf{P}_{W \mid Z,a}^{-1}
\quad \text{so that} \quad
\mathbf{P}_{Y \mid Z,a} = \mathbf{H}\,\mathbf{P}_{W \mid Z,a}.
\]
Thus, the model from \citet{kuroki14measurement} is contained within that of \citet{miao2018identifying}. In other words, \citet{kuroki14measurement} imposes stronger assumptions.

Comparing to Set 2, again let
\[
\mathbf{A} = \mathbf{P}_{W \mid U}, \quad 
\mathbf{B} = \mathbf{P}_{Z \mid U,a}\,\operatorname{diag}(f_{U \mid a}), \quad 
\mathbf{C} = \mathbf{P}_{Y \mid U, a},
\]
in Definition~\ref{def:cp_decomp}, so that  
\(
\boldsymbol{\mathcal{T}}_j 
= \mathbf{A}\,\mathbf{C}_j\,\mathbf{B}^{\top} = \mathbf{P}_{y_j, W, Z \mid, a} =\)
\[
\big[\,f_{Y,W,Z \mid A}(y_j, w, z_1 \mid a), \;\dots,\; f_{Y,W,Z \mid A}(y_j, w, z_{|\mathcal{Z}|} \mid a)\,\big].
\] Define \(k_A\), \(k_B\), and \(k_C\) as the $k$-ranks of \(\mathbf{A}\), \(\mathbf{B}\), and \(\mathbf{C}\), respectively. 

Assumptions~\ref{assump:completeness1} 
and~\ref{assump:bridge1} from \citet{miao2018identifying} do not imply that $2|\mathcal{U}| + 2 \leq k_A + k_B + k_C$, nor does the converse hold. In other words, the models are non-nested.

\subsection{Continuous Model Comparison}

Table~\ref{tab:continuous} reviews the assumptions of \citet{miao2018identifying} alongside those of \citet{hu2008instrumental} for continuous models. We conjecture that the two models are non-nested in the model represented by Figure~\ref{fig:3}. We give an example satisfying the assumptions of both approaches, which appears in the latter half of \citet{kuroki14measurement} and is revisited in \citet{miao2018identifying}.

\begin{table*}[tb]
    \centering
    \caption{Comparison for continuous hidden confounding model in Figure~\ref{fig:3}.$^{\ast}$}
    \label{tab:continuous}
    \renewcommand{\arraystretch}{1.35}
    \setlength{\tabcolsep}{6pt}
    \small
    \begin{tabularx}{\linewidth}{|X|X|X|}
        \hline
        \textbf{Bridge Equation Approach$^{\dagger}$} &
        \textbf{Array Decomposition Approach$^{\ddagger}$} &
        \textbf{Overlap Example} \\
        \hline
        \multicolumn{3}{|c|}{\textbf{Assumptions}} \\
        \hline
        \begin{titemize}[leftmargin=*,nosep]
            \item \textbf{Completeness}: 
            \newline $\text{For } g \text{ square-integrable, }$ \newline $\mathbb{E}[g(u) \mid z,a]=0 \;\mathrm{ a.s.} \iff g(u)=0 \;\mathrm{ a.s.}$
            \vspace{0.3em}
            \item \textbf{Bridge equation}: 
            \newline $\exists\, h(w,a,y)$ s.t. 
            $f_{Y \mid Z,A}(y \mid z,a)=\int h(w,a,y)\, f_{W \mid Z,A}(w \mid z,a)\, dw < \infty$
        \end{titemize}
        &
        \begin{titemize}[leftmargin=*,nosep]
            \item \textbf{Bounded densities}: 
            \newline {\small \((W, Z, Y, U, A)\) admit a joint density that is bounded and nonzero.}
            \vspace{0.3em}
            \item \textbf{Completeness}: 
            \newline $\text{For } g \text{ square-integrable, }$ \newline $\mathbb{E}[g(u) \mid z,a]=0 \;\mathrm{ a.s.} \iff g(u)=0 \;\mathrm{ a.s.}$ and $\mathbb{E}[g(u) \mid w,a]=0 \;\mathrm{ a.s.} \iff g(u)=0 \;\mathrm{ a.s.}$
            \vspace{0.3em}
            \item \textbf{Non-collinearity}: 
            \newline The set \(\{y: f_{Y \mid U,A}(y \mid u_i, a) \neq f_{Y \mid U,A}(y \mid u_j, a)\}\) has positive probability under the marginal of $Y \mid A=a$
        \end{titemize}
        &
        \begin{titemize}[leftmargin=*,nosep]
            \item[] \[\hspace{0pt}\begin{aligned}
                    U &= \mu_U + \varepsilon_U,\\
                    Z &= \beta_{0Z} + \alpha_{UZ}U + \varepsilon_Z,\\
                    A &= \beta_{0A} + \alpha_{UA}U + \alpha_{ZA}Z + \varepsilon_A,\\
                    W &= \beta_{0W} + \alpha_{UW}U + \varepsilon_W,\\
                    Y &= \beta_{0Y} + \alpha_{AY}A + \alpha_{UY}U + \varepsilon_Y,
                    \end{aligned}
                    \]
                    with all disturbances mean-zero Gaussian and mutually independent
        \end{titemize}
        \\
        \hline
        \multicolumn{3}{|c|}{\textbf{Identified Targets}} \\
        \hline
        \begin{titemize}[leftmargin=*,nosep]
            \item $f_{Y(a)}(y)$
        \end{titemize}
        &
        \begin{titemize}[leftmargin=*,nosep]
            \item $f_{A,Y,U,W,Z}(a,y,u,w,z)$ up to latent labels
            \vspace{0.3em}
            \item $f_{Y(a)}(y)$
        \end{titemize}
        &
        \begin{titemize}[leftmargin=*,nosep]
            \item $f_{A,Y,U,W,Z}(a,y,u,w,z)$ up to latent labels
            \vspace{0.3em}
            \item $f_{Y(a)}(y)$
        \end{titemize}
        \\
        \hline
    \end{tabularx}

    \vspace{0.4em}
    \begin{minipage}{0.95\linewidth}
    \footnotesize
    $^{\ast}$ All assumptions are stated for each \(a \in \mathcal{A}\). \\
    $^{\dagger}$ Bridge equation approach assumptions are from \citet{miao2018identifying}. \\
    $^{\ddagger}$ Array decomposition approach assumptions are from \citet{hu2008instrumental}, from which \citet{deaner2023controllinglatentconfoundingtriple} shows \(f_{Y(a)}(y)\) is identified.
    \end{minipage}
\end{table*}

\section{A Brief Note on Estimation}\label{sec:5}

While our primary focus in this paper is on comparing model assumptions for identification, it is worth noting that \citet{miao2018identifying} provide an illustrative example in which \(f_{Y(a)}(y)\) can be derived analytically under location--scale family assumptions, yielding a tractable framework for consistent estimation of causal effects. In their work, the efficient influence function has also been established under further model assumptions, giving rise to estimation of causal effects achieving semiparametric efficiency \citep{semiparametric2020cui}. In contrast, \citet{hu2008instrumental} rely on sieve methods to approximate factors of the full law. Some semiparametric theory has also been developed for estimating causal effects recovered from the array decomposition approach in the context of proxies of hidden treatment, though this literature remains relatively nascent \citep{zhou2024causalinferencehiddentreatment}.

\section{Conclusion and Future Directions}\label{sec:6}

Proxy-based approaches have emerged as a promising solution to identifying causal targets when hidden variables pose an obstacle. Within this framework, two complementary strategies stand out. The bridge equation method identifies causal targets by solving functional equations, while the array decomposition approach leverages unique determination of eigenspaces to uncover latent distributions which constitute those targets. Although both methods hinge on assumptions regarding the informativeness of proxy variables for the latent structure, comparing their respective assumptions and areas of overlap offers clearer insight into the contexts in which each method is most appropriately applied.

Looking ahead, several directions for future research present themselves. One avenue is to explore how proximal methods can extend to more complicated scenarios that are generally non-identifiable. A further direction is to establish general semiparametric theory and influence function-based estimators for causal effects identified through these proximal approaches. Such advances would expand the scope of settings in which causal targets are identifiable and provide robust tools for researchers working with complex data to estimate them.

\appendix

\section{Definitions}\label{app:matrix_def}
Let $V$ and $T$ be
discrete random variables with supports
\[
\mathcal V=\{v_1,\dots,v_{|\mathcal V|}\}, 
\qquad
\mathcal T=\{t_1,\dots,t_{|\mathcal T|}\}.
\]

For each $a\in\mathcal A$, define matrix
$\mathbf P_{V\mid T,a}\in\mathbb R^{|\mathcal V|\times|\mathcal T|}$ by
\begin{equation}\label{eq:matrix}
\big(\mathbf P_{V\mid T,a}\big)_{ij}
:=
\Pr(V=v_i \mid T=t_j, A=a).
\end{equation}

When conditioning on $A$ is not intended, we write matrix
$\mathbf P_{V\mid T}$, defined analogously by
\[
\big(\mathbf P_{V\mid T}\big)_{ij}
:=
\Pr(V=v_i \mid T=t_j).
\]
\section{Frontdoor Adjustment Formula}\label{app:frontdoor}

Consider the hidden variable model in Figure~\ref{fig:A1}, where variables \(U\) creating treatment-outcome confounding are unobserved, however a mediator \(M\) between treatment \(A\) and outcome \(Y\) is observed. Here, the counterfactual distribution \(f_{Y(a)}(y)\) is identified by the frontdoor formula,
{\small
\begin{equation}\label{eq:frontdoor}
f_{Y(a)}(y) =  
\iint f_{Y \mid A,M}(y \mid a', m) \, f_{M \mid A}(m \mid a) \, f_A(a') \, da' \, dm,
\end{equation}} provided the following conditions hold for each \(a \in \mathcal{A}\) and \(m \in \mathcal{M}\):  

\begin{assumption}[Counterfactual Independences] 
\[M(a) \perp\!\!\!\perp A \text{ and } Y(m) \perp\!\!\!\perp M \mid A,\]
\end{assumption}

\begin{assumption}[Exclusion restriction]
\[Y(a,m) = Y(m),\]
\end{assumption}

\begin{assumption}[Consistency]
\[M = M(a) \text{ when } A = a, \text{ and }\]
\[Y = Y(a,m) \text{ when } A = a, M = m,\]
\end{assumption}

\begin{assumption}[Positivity]
\[f_{M \mid A}(m \mid a) > 0 \text{ and } f_A(a) > 0.\]
\end{assumption}

\begin{figure}[b]
    \centering
    \begin{tikzpicture}[>=latex, node distance=2.3cm, anchor=center]
        \node (A) [draw, circle] {A};
        \node (M) [draw, circle, right of=A] {M};
        \node (Y) [draw, circle, right of=M] {Y};
        \node (U) [draw, circle, fill=red!30, above of=A, xshift=2.2cm] {U};

        \draw[->] (A) -- (M);
        \draw[->] (M) -- (Y);
        \draw[->] (U) -- (A);
        \draw[->] (U) -- (Y);
    \end{tikzpicture}
    \caption{\centering 
    Hidden confounding with observed mediation; \\
    Supports $\mathcal{A,M,Y,U}$} 
    \label{fig:A1}
\end{figure}
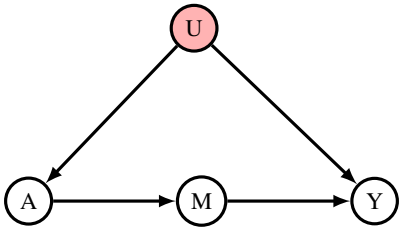

\section{Kuroki and Pearl (2014)}\label{app:discrete_kp}
We summarize the identification proof from \citet{kuroki14measurement}.

\begin{proof}

By Assumption~\ref{assump:ind1} and Assumption~\ref{assump:ind2}, for each \(a \in \mathcal{A}\) and \(y \in \mathcal{Y}\),
\begin{equation}
    \mathbf{P}_{W \mid Z, a} = \mathbf{P}_{W \mid U} \mathbf{P}_{U \mid Z, a}, \text{ and} \\
\end{equation}
\begin{equation}
    \mathbf{P}_{y, W \mid Z, a} = \mathbf{P}_{W \mid U} \operatorname{diag}(f_{y \mid U, a}) \mathbf{P}_{U \mid Z, a},
\end{equation}
where $\operatorname{diag}(f_{y \mid U, a})$ denotes the diagonal matrix with $(i,i)$-th entry $\Pr(Y = y \mid U = u_i, A = a)$.

By Assumption~\ref{assump:inv1}, \(\mathbf{P}_{U \mid Z, a}^{-1}\) exists and  
\[
\mathbf{P}_{W \mid Z, a}^{-1} = \mathbf{P}_{U \mid Z, a}^{-1}\, \mathbf{P}_{W \mid U}^{-1}.
\] Hence,
\begin{equation}
\mathbf{P}_{y, W \mid Z, a} \mathbf{P}_{W \mid Z, a}^{-1} = \mathbf{P}_{W \mid U} \operatorname{diag}(\mathbf{P}_{y \mid U, a})  \mathbf{P}_{W \mid U}^{-1}.
\end{equation}

Then eigendecomposition allows us to identify the column vectors of 
\(\mathbf{P}_{W \mid U}\) under Assumption~\ref{assump:distinct_diag}. In other words, we recover conditional distributions \(f_{W \mid U}(w \mid u_i)\), up to the labeling of the latent variable \(U\).  
Inverting   
\(
f_{Y,W \mid A}(y, w \mid a) = \sum_{i} f_{Y,U \mid A}(y, u_i \mid a)\, f_{W \mid U}(w \mid u_i),
\)  
we recover \[f_{Y\mid U, A}(y \mid u_i, a), \quad f_{U \mid A}(u_i \mid a).\] Further, the component \(f_{W \mid U}(w \mid u_i) = f_{W \mid U}(w \mid u_i, a)\) for all values \(a \in \mathcal{A}\), so that the labels can be aligned across treatment levels, yielding the components in Equation~\eqref{eq:adjustment}.

\end{proof}

\section{Mediator Proxy Analogue for Kuroki and Pearl (2014)}\label{app:KP_mediation}

We provide analogous assumptions to Assumption~\ref{assump:ind1}-\ref{assump:distinct_diag} in \citet{kuroki14measurement}, instead leveraging proxies of a hidden mediator depicted in Figure~\ref{fig:A3}. 

\begin{assumption}\label{assump:mediation_ind1}
\(W \perp\!\!\!\perp A \mid M.\)
\end{assumption}

\begin{assumption}\label{assump:mediation_ind2}
\[W, Z, Y \text{ are mutually independent 
conditional on } M,A.\]
\end{assumption}

\begin{assumption}\label{assump:mediation_inv1}
For each \(a \in \mathcal{A}\), the matrices \(\mathbf{P}_{Z \mid M, a}\) and \(\mathbf{P}_{W \mid M}\) are invertible.
\end{assumption}

\begin{assumption}\label{assump:mediation_distinct_diag}
For each \(a \in \mathcal{A}\), some row of $\mathbf{P}_{Y \mid M,a}$ has distinct entries.
\end{assumption}

\begin{figure}[b]
    \centering
    \begin{tikzpicture}[>=latex, node distance=2.3cm, anchor=center]
            % Define nodes
        \node (A) [draw, circle] {A};
        \node (M) [draw, circle, fill=red!30, right of=A] {M};
        \node (Y) [draw, circle, right of=M] {Y};
        \node (W) [draw, circle, above of=Y] {W};
        \node (Z) [draw, circle, above of=A] {Z};
        \node (U) [draw, circle, fill=red!30, above of=M] {U};

            % Draw original edges
        \draw[->] (A) -- (M);
        \draw[->] (M) -- (Y);
        \draw[->] (U) -- (Y);
        \draw[->] (U) -- (A);
        \draw[->] (M) -- (W);
        \draw[->] (M) -- (Z);
        \draw[->, dotted] (A) -- (Z);
    \end{tikzpicture}
    \caption{ \centering Analogue of Kuroki and Pearl (2014);\\
    Supports $\mathcal{A,M,Y,U,W,Z}$}
    \label{fig:A3}
\end{figure}

$f_{Y(a)}(y)$ is identified by the proof in Appendix~\ref{app:discrete_kp}, replacing \(U\) with \(M\).

%%%%%%%%%%%%%%%%%%%%%%%%%%%%%%%%%%%%%%%%%%%%%%
%% Support information, if any,             %%
%% should be provided in the                %%
%% Acknowledgements section.                %%
%%%%%%%%%%%%%%%%%%%%%%%%%%%%%%%%%%%%%%%%%%%%%%
\begin{acks}[Acknowledgments]
The authors are grateful to Eric Tchetgen Tchetgen, Trung Phung, and Beatrix Wen for helpful discussions.
\end{acks}

%%%%%%%%%%%%%%%%%%%%%%%%%%%%%%%%%%%%%%%%%%%%%%
%% Funding information, if any,             %%
%% should be provided in the                %%
%% funding section.                         %%
%%%%%%%%%%%%%%%%%%%%%%%%%%%%%%%%%%%%%%%%%%%%%%

\begin{funding}
ONR Grant N000142412701.
\end{funding}

%%%%%%%%%%%%%%%%%%%%%%%%%%%%%%%%%%%%%%%%%%%%%%%%%%%%%%%%%%%%%
%%                  The Bibliography                       %%
%%                                                         %%
%%  imsart-???.bst  will be used to                        %%
%%  create a .BBL file for submission.                     %%
%%                                                         %%
%%  Note that the displayed Bibliography will not          %%
%%  necessarily be rendered by Latex exactly as specified  %%
%%  in the online Instructions for Authors.                %%
%%                                                         %%
%%  MR numbers will be added by VTeX.                      %%
%%                                                         %%
%%  Use \cite{...} to cite references in text.             %%
%%                                                         %%
%%%%%%%%%%%%%%%%%%%%%%%%%%%%%%%%%%%%%%%%%%%%%%%%%%%%%%%%%%%%%

\bibliographystyle{apalike}
\bibliography{references}

\end{document}